\documentclass[10pt,journal]{IEEEtran}
\usepackage{cite}
\usepackage{amsmath,amssymb}
\usepackage{graphicx}
\usepackage{booktabs}
\usepackage[table]{xcolor}
\usepackage{hyperref}
\hypersetup{hidelinks}
\newcommand{\tablestyle}[2]{\setlength{\tabcolsep}{#1}\renewcommand{\arraystretch}{#2}}
\newcommand{\tablefontsize}{\fontsize{7.2pt}{8.4pt}\selectfont}
\newcommand\headspace{\hspace{.2em}}
\newcommand{\err}[1]{{\color{black!55}\ensuremath{\pm #1}}}
\newtheorem{proposition}{Proposition}

\title{Condition-Wise Sinkhorn Drifting for One-Shot Learned Channel Simulation}

\author{Rick Fritschek,~\IEEEmembership{Member,~IEEE,} and Rafael F. Schaefer,~\IEEEmembership{Senior Member,~IEEE}%
\thanks{Rick Fritschek and Rafael F. Schaefer are with the Chair of Information Theory and Machine Learning, Technische Universit{\"a}t Dresden, Germany.}%
\thanks{This work was supported in part by the German Federal Ministry of Research, Technology and Space (BMFTR) through the Transfer Hub \emph{6G-life} under Grant 16KIS2413K and in part by the German Research Foundation (DFG, Deutsche Forschungsgemeinschaft) as part of Germany's Excellence Strategy--EXC 2050/2--Project ID 390696704--Cluster of Excellence \emph{``Centre for Tactile Internet with Human-in-the-Loop'' (CeTI)} of TUD Dresden University of Technology.}}

\begin{document}
\maketitle

\begin{abstract}
Learned communication systems may evaluate stochastic channel surrogates millions of times inside differentiable training loops, making diffusion-style reverse sampling expensive.
This paper proposes condition-wise Sinkhorn drifting, a one-shot channel surrogate that preserves the transmitted symbol and transports only the conditional output laws \(p(y\mid x)\).
We formulate a conditional Sinkhorn objective over repeated outputs at the same transmitted symbol and train the generator with finite-sample barycentric velocities followed by detached particle regression.
Experiments on additive white Gaussian noise (AWGN), Rayleigh fading, solid-state power amplifier (SSPA) nonlinearity, and a compact tapped-delay-line (TDL) channel compare direct drifting, joint Sinkhorn drifting, condition-wise Sinkhorn drifting, conditional denoising diffusion probabilistic modeling (DDPM), denoising diffusion implicit modeling (DDIM), and Wasserstein generative adversarial network (WGAN) references.
Within the evaluated one-shot drifting-family variants, condition-wise Sinkhorn is strongest under conditional diagnostics and symbolic-coding checks, while diffusion remains strongest on the hardest downstream symbol-error-rate (SER) curves.
The resulting operating point is a condition-preserving one-shot simulator for settings where repeated channel calls make diffusion-style sampling too costly.
\end{abstract}

\begin{IEEEkeywords}
learned channel simulation, conditional generative models, Sinkhorn divergence, diffusion models, inference latency
\end{IEEEkeywords}

\section{Introduction}
Accurate channel models are a basic requirement for modern communication system design.
They are particularly important in neural and end-to-end learned systems, where the channel appears repeatedly inside training loops, decoder adaptation procedures, and differentiable autoencoder pipelines.
When a clean analytical channel model is unavailable, too restrictive, or poorly matched to measured data, generative channel surrogates become attractive because they can learn the full conditional distribution directly from samples.

This problem also involves more than distributional accuracy.
In many learning pipelines, the channel model is evaluated millions of times.
As a result, sample quality, conditioning accuracy, computational cost, and differentiability all matter simultaneously.
The practical question is whether a generative model can match the channel distribution at a cost compatible with large-scale training and deployment workflows.

Recent work on diffusion-based channel modeling has shown that conditional diffusion models can generate high-quality channel samples across standard benchmarks such as additive white Gaussian noise (AWGN), Rayleigh fading, and nonlinear amplifier channels \cite{paper2309}.
Follow-up work has further studied robustness and speed-quality tradeoffs for diffusion-based channel generation \cite{kimicc2024}, extended diffusion-based channel synthesis to high-dimensional user-specific channels \cite{lee2024highdimchannels}, and proposed diffusion-driven digital-twin-style generation of statistical channel state information from sensing and location information \cite{gong2025dtoc}.
More broadly, diffusion methods have also begun to influence adjacent wireless tasks such as high-dimensional channel estimation \cite{zhou2025channelestimation}, low-complexity multiple-input multiple-output (MIMO) channel estimation \cite{fesl2024mimoestimation,chen2025mimovi}, joint channel estimation and data detection \cite{zilberstein2024joint}, and fast time-varying MIMO orthogonal frequency-division multiplexing (OFDM) estimation \cite{liu2025mimoofdm}.
This line of work is compelling because diffusion models are flexible, robust, and often more stable than earlier adversarial approaches.
This improvement comes with a practical system cost.
Inference requires an iterative reverse denoising procedure, and the latency scales with the number of diffusion steps.
This tradeoff is manageable when the channel generator is called occasionally.
It becomes more problematic when the generator is embedded in inner loops of iterative training or repeated Monte Carlo evaluations.
This latency bottleneck has motivated several acceleration strategies for the denoising diffusion probabilistic model (DDPM) \cite{ddpm}, including skipped or reduced-step sampling as in denoising diffusion implicit modeling (DDIM), as well as one-step or few-step alternatives such as consistency models and distillation-based approaches that compress multi-step samplers into substantially shorter generation paths \cite{ddim,song2023consistency,yin2024onestep}.
This motivates the study of one-shot conditional generators for learned channel simulation.
Among recently proposed generative paradigms, drifting models \cite{drifting} are particularly interesting because they learn the transport from a simple latent source to the target data distribution directly in the network parameters.
Intuitively, diffusion spends computation at test time by progressively transforming noise into data through a reverse-time chain.
Drifting spends computation during training.
Generated samples are repeatedly nudged toward the data manifold by a drift field, and the neural generator is updated so that future samples move closer to the desired distribution in a single forward pass.
Once trained, sampling is therefore naturally one-shot.
Here, one-shot means one neural generator evaluation per channel sample, with the usual latent draw and conditioning input but no reverse denoising chain or iterative particle correction at inference.
This perspective is also consistent with recent geometric analyses of generative dynamics that relate diffusion-like behavior to autonomous vector-field structure and marginalized noise geometry \cite{geometryofnoise}.
Recent work on one-step generation via Wasserstein gradient flows (W-Flow) \cite{han2026wflow} sharpens this viewpoint by replacing heuristic drifting fields with Sinkhorn-divergence optimal-transport velocities.
This is directly relevant to channel simulation, because the quality and stability of the drift field is the main design question once inference is constrained to one generator evaluation.

The distinction between ordinary generation and channel simulation is central.
In ordinary generative modeling, samples are transported in the data space until the model distribution matches the target distribution.
In channel simulation, the transmitted symbol is fixed side information.
The relevant target is the family of conditional laws \(\{p(\cdot\mid x)\}_{x\sim\mu}\).
A generator can match the global output cloud while mixing samples from different transmitted symbols, which produces an incorrect channel surrogate for communication-system training.
This motivates a condition-wise transport formulation in which couplings preserve \(x\) and move only the output component \(y\).
The proposed condition-wise Sinkhorn drifting implements this idea as a one-shot neural channel generator.
It uses drifting's detached-target regression and the Sinkhorn/W-Flow barycentric velocity as building blocks.
The channel-specific contribution is to constrain the transport problem to repeated outputs at the same transmitted symbol, so that the learned surrogate moves samples inside \(p(y\mid x)\) rather than across different transmitted symbols.

The emphasis of this study is on generator-level distributional accuracy, computational tradeoffs, and downstream channel-implant checks.
A full communication-system design study over all modulation and coding regimes is outside the present scope.
The condition-wise Sinkhorn estimator also assumes simulator-style access to repeated outputs at the same transmitted symbol.
Measured datasets with only one observation per condition require local conditional approximations, which we formulate only as a limitation and future extension.

The main contributions are summarized as follows.
\begin{itemize}
\item We formulate condition-wise Sinkhorn drifting for learned channel simulation by integrating Sinkhorn divergences over the fixed-input conditional laws \(p(\cdot\mid x)\).
\item We connect this objective to a conditional Wasserstein-flow velocity, then state how the practical method differs from the population flow through finite-sample barycentric estimation and projected detached-target neural training.
\item We implement direct-output conditional drifting and compare joint/global Sinkhorn transport with condition-wise Sinkhorn transport under a matched one-shot generator.
\item We compare the proposed one-shot surrogate against conditional DDPM, conditional DDIM, and conditional Wasserstein generative adversarial network (WGAN) reference models across AWGN, Rayleigh fading, solid-state power amplifier (SSPA), and compact tapped-delay-line (TDL) channels using generator-level metrics, downstream coding checks, and latency measurements.
\item We evaluate channel surrogates with direct sliced Wasserstein distance (SWD), anchor-conditioned moment metrics, downstream symbol-error rate (SER), bit error rate (BER), and timing measurements, showing that communication usefulness is better reflected by conditional and downstream metrics than by global SWD alone.
\end{itemize}
Source code will be made available through the first author's GitHub repository upon publication\footnote{\url{https://github.com/Fritschek/conditional_drifting_models}.}.

\section{Problem Formulation and Drifting Background}
\subsection{What drifting changes conceptually}
Because drifting is very recent and remains less familiar than diffusion, we briefly summarize its generative viewpoint.
Following \cite{drifting}, let \(f_\theta\) map a latent variable \(z \sim p_z\) to a sample \(u = f_\theta(z)\), and let
\begin{equation}
q_\theta = (f_\theta)_\# p_z
\end{equation}
denote the pushforward distribution induced by the generator.
The training objective uses a \emph{drifting field} that tells us how samples drawn from the current model distribution should move so that the model distribution approaches the target distribution \(p\).

At a high level, the desired property is an equilibrium condition.
If the model already matches the target distribution, then the drift should vanish.
Using \(V_{p,q}(u)\) to denote the field evaluated at a sample \(u\) under target distribution \(p\) and model distribution \(q\), the fixed-point intuition can be summarized as
\begin{equation}
q = p
\quad \Longrightarrow \quad
V_{p,q}(u) = 0,
\end{equation}
which is the sense in which matched distributions form an equilibrium of the training dynamics \cite{drifting}.

This makes the training logic very different from diffusion.
In diffusion, one learns a denoiser and then performs iterative refinement at inference time.
In drifting, one instead constructs improved targets during training by moving current samples along the field \(V_{p,q}\), and then regresses the generator onto those improved targets.
As a result, the iterative correction happens during optimization.

The one-step view can be written as follows.
Starting from a generated sample \(u = f_\theta(z)\), define a drifted target
\begin{equation}
\tilde{u}
=
\mathrm{stopgrad}\!\left(
u + \eta V_{p,q_\theta}(u)
\right),
\end{equation}
and optimize the generator by
\begin{equation}
\mathcal{L}_{\text{drift}}
=
\mathbb{E}_{z \sim p_z}
\left[
\left\|
f_\theta(z) - \tilde{u}
\right\|_2^2
\right].
\end{equation}
The stop-gradient is essential here because it makes the drifted sample behave like a fixed target for the current optimization step.
Repeated training steps therefore evolve the pushforward \(q_\theta\) toward the target distribution.
This is also the cleanest way to see why drifting is naturally one-shot at inference.
Once the generator has learned to emit samples near equilibrium, sampling only requires drawing \(z\) and evaluating \(f_\theta(z)\), without a reverse-time chain.

The W-Flow formulation \cite{han2026wflow} connects this one-shot training principle to entropic optimal transport.
Let \(\operatorname{OT}_{\varepsilon}(q,p)\) denote the entropic optimal-transport cost between a generated law \(q\) and target law \(p\), computed with Sinkhorn iterations \cite{cuturi2013sinkhorn}.
The debiased Sinkhorn divergence is
\begin{equation}
S_{\varepsilon}(q,p)
=
\operatorname{OT}_{\varepsilon}(q,p)
-\frac{1}{2}\operatorname{OT}_{\varepsilon}(q,q)
-\frac{1}{2}\operatorname{OT}_{\varepsilon}(p,p),
\end{equation}
which removes the entropic self-bias and interpolates between optimal-transport and kernel-like behavior as \(\varepsilon\) changes \cite{feydy2019sinkhorn}.
For an empirical generated particle \(u\), the corresponding W-Flow velocity can be written in barycentric-projection form as
\begin{equation}
V_{\varepsilon}(u)
=
T^{\varepsilon}_{q,p}(u)
-
T^{\varepsilon}_{q,q}(u),
\label{eq:wflow_velocity}
\end{equation}
where \(T^{\varepsilon}_{q,p}\) maps a generated sample to the barycenter of target samples under the entropic optimal-transport coupling, and \(T^{\varepsilon}_{q,q}\) is the generated self-transport projection.
This has the same attraction--repulsion structure as the practical drifting rule.
The weights come from a globally mass-constrained coupling that replaces independently row-normalized kernels.

\subsection{Conditional channel simulation}
We now specialize the generic drifting view to learned conditional channel simulation.
Let \(x \in \mathbb{R}^n\) denote the transmitted channel input and \(y \in \mathbb{R}^n\) the channel output.
The objective is to learn a conditional generator that reproduces the channel law given \(x\).
In this setting, conditioning enters through the transmitted symbol, while stochasticity is carried by a latent variable \(z \sim \mathcal{N}(0, I)\).

The experiments use a direct channel-output formulation.
The generator approximates \(p(y \mid x)\) and predicts
\begin{equation}
\hat{y} = g_\theta(x,z).
\end{equation}
This choice aligns the drifting generator with the direct-output diffusion and WGAN baselines and with the downstream channel-implant evaluation, where the learned surrogate is used as a simulator for \(y\) given \(x\).
Residual-output variants were examined during development and left out of the final benchmark after weaker direct-output and downstream channel-implant performance.

\subsection{Conditional drifting model}
  We use a practical conditional adaptation inspired by \cite{drifting}.
  To avoid confusion with the communication input symbol \(x\), we use \(u\) above for the generic
  sample variable in the abstract formulation and reserve \(x\) below for the transmitted symbol.
  In our implementation, the generator predicts \(\hat y = g_\theta(x,z)\), where
  \begin{equation}
  z \sim \mathcal{N}(0, I).
  \end{equation}
  The network is implemented as a two-hidden-layer multilayer perceptron (MLP) with hidden
  width \(128\), latent dimension \(16\), and sigmoid linear unit (SiLU) activations.
\begin{equation}
[x, z]
\rightarrow
\mathrm{Linear}
\rightarrow
\mathrm{SiLU}
\rightarrow
\mathrm{Linear}
\rightarrow
\mathrm{SiLU}
\rightarrow
\mathrm{Linear}.
\end{equation}
The training target is formed from an attraction--repulsion drift field in direct output space. This is a concrete channel-specific instantiation of the generic field \(V_{p,q}\).
Given generated samples \(G = \{g_i\}\) and positive samples \(P = \{p_j\}\), we compute
\begin{equation}
V(g_i)
=
\left(
\sum_j w^{(P)}_{ij} p_j - g_i
\right)
-
\lambda
\left(
\sum_k w^{(G)}_{ik} g_k - g_i
\right),
\end{equation}
where \(w^{(P)}\) and \(w^{(G)}\) are row-normalized radial basis function kernel weights.
The first term attracts each generated sample toward a local barycenter of true output samples, and the second term repels it from nearby generated samples to reduce collapse.
This yields a detached target
\begin{equation}
\tilde{g}_i
=
\mathrm{stopgrad}\!\left(g_i + \eta V(g_i)\right),
\end{equation}
and the generator is optimized by
\begin{equation}
\mathcal{L}_{\text{drift}}
=
\frac{1}{|G|}
\sum_i
\left\|
g_i - \tilde{g}_i
\right\|_2^2.
\end{equation}
The generic update \(u \mapsto u + \eta V_{p,q}(u)\) from the drifting formulation is realized here with a kernel-based drift field estimated directly from minibatches of true and generated output samples.
This empirical field acts as a kernel particle interaction rule.
The attraction term pulls generated samples toward the target sample cloud, while the repulsion term prevents generated particles from collapsing onto a few modes.
In this sense, the method is related in spirit to kernel-based distribution shaping and maximum mean discrepancy matching.
Here it is used as a practical training rule, without deriving a formal discrepancy minimization objective.

In the simulations, the drift scale is \(\eta=1.0\), the repulsive weight is \(\lambda=1.0\), the minimum kernel bandwidth is \(0.2\), and the maximum drift norm is \(2.0\).
These settings were chosen because they produced stable training behavior and materially improved the early AWGN drifting baselines.

The direct-drifting kernel field serves as the plain one-shot drifting baseline in the experiments.
Its quality still depends on bandwidth selection and on how minibatch interactions approximate the conditional channel geometry.
To reduce this channel-specific kernel-design burden, the main drift-field ablation replaces the row-normalized kernel interaction by Sinkhorn/W-Flow barycentric velocities while preserving the same detached-target training loop and the same one-shot inference path.

\subsection{Condition-wise Sinkhorn drifting}
The W-Flow objective in \cite{han2026wflow} is unconditional.
Both \(p\) and \(q_\theta\) are distributions on the generated sample space.
For channel simulation, the generated and true joint laws share the same transmitted-symbol marginal.
Writing \(\mu=P_X\) for the input marginal and \(p_x=P_{Y\mid X=x}\) for the conditional output law, the target and generated joint laws disintegrate as
\begin{equation}
p(dx,dy)=\mu(dx)p_x(dy),
\qquad
q_\theta(dx,dy)=\mu(dx)q_{\theta,x}(dy).
\end{equation}
The condition \(x\) is fixed side information.
Transporting in the full joint space \((x,y)\) therefore solves a different problem from matching the channel law \(p(y\mid x)\), because it allows the geometry of the fixed condition variable to influence the coupling while the update can only move the output component.

We instead define the conditional Sinkhorn functional
\begin{equation}
\mathcal{S}^{\mathrm{cond}}_{\varepsilon}(q_\theta,p)
=
\int
S_{\varepsilon}\!\left(q_{\theta,x},p_x\right)
\mu(dx),
\label{eq:conditional_sinkhorn}
\end{equation}
where the Sinkhorn divergence is computed separately over output samples for each fixed transmitted symbol.
For brevity, we call the output cloud at a fixed transmitted symbol an output fiber.
In measure-theoretic terms, \(p_x=p(\cdot\mid x)\) is the conditional measure over that fiber.
Throughout the paper we refer to the resulting estimator as condition-wise Sinkhorn drifting.
Equivalently, admissible couplings must preserve the condition.
\begin{equation}
\pi(dx,dy,d\bar y)
=
\mu(dx)\pi_x(dy,d\bar y),
\qquad
\pi_x \in \Pi(q_{\theta,x},p_x).
\end{equation}
The Wasserstein gradient flow of \eqref{eq:conditional_sinkhorn} has zero \(x\)-velocity and evolves only inside each fixed-input conditional law.
\begin{equation}
\partial_t q_t(x,y)
+
\nabla_y\!\cdot
\left(
q_t(x,y)v_t(x,y)
\right)
=0.
\end{equation}
Applying the W-Flow velocity condition-wise gives
\begin{equation}
v_t(x,y)
=
T^{\varepsilon}_{q_{t,x},p_x}(y)
-
T^{\varepsilon}_{q_{t,x},q_{t,x}}(y).
\label{eq:condition_sinkhorn_velocity}
\end{equation}

For channel simulation, the key consequence is the following equilibrium statement.
\begin{proposition}[Condition-wise Sinkhorn equilibrium]
Under the standard Sinkhorn-divergence assumptions of \cite{feydy2019sinkhorn}, the conditional objective \(\mathcal{S}^{\mathrm{cond}}_{\varepsilon}\) is nonnegative and is zero exactly when
\begin{equation}
q_{t,x}=p_x
\quad
\text{for } \mu\text{-a.e. }x.
\end{equation}
Along the corresponding population W-Flow dynamics,
\begin{equation}
\frac{d}{dt}\mathcal{S}^{\mathrm{cond}}_{\varepsilon}(q_t,p)
=
-
\int
\|v_t(x,\cdot)\|_{L^2(q_{t,x})}^2
\mu(dx)
\leq 0.
\end{equation}
Thus the equilibrium of the condition-wise flow is equality of the full family of channel laws \(p(\cdot\mid x)\).
\end{proposition}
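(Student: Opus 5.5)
The plan has two parts: a fiberwise argument for nonnegativity and the equality case, and a chain-rule computation along the condition-preserving continuity equation for the dissipation identity.

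\emph{Part one: nonnegativity and equality.}
\begin{itemize}
\item I invoke the result of \cite{feydy2019sinkhorn} for each fixed \(x\). For a Lipschitz cost of the form \(\|y-\bar y\|^2\) or \(\|y-\bar y\|\), on compactly supported (or suitably integrable) measures, the debiased divergence satisfies \(S_\varepsilon(\alpha,\beta)\ge 0\), with equality iff \(\alpha=\beta\). These are the ``standard assumptions''. I take them uniformly in \(x\) and assume \(x\mapsto q_{\theta,x},p_x\) are measurable kernels.
\item Measurability ensures \(x\mapsto S_\varepsilon(q_{\theta,x},p_x)\) is a measurable, nonnegative function. Hence the integral in \eqref{eq:conditional_sinkhorn} is well defined and nonnegative.
\item A nonnegative measurable function has zero \(\mu\)-integral iff it vanishes \(\mu\)-a.e. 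Combined with the fiberwise equality case, this gives \(\mathcal{S}^{\mathrm{cond}}_\varepsilon=0\) iff \(q_{t,x}=p_x\) for \(\mu\)-a.e.\ \(x\).
\item Conversely, if the kernels agree \(\mu\)-a.e., the joint laws coincide by the disintegration. This yields the final sentence of the statement.
\end{itemize}

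\emph{Part two: the dissipation identity.} I use the first variation of the Sinkhorn divergence from \cite{feydy2019sinkhorn}.
\begin{itemize}
\item With \(\beta=p_x\) fixed, the derivative of \(\alpha\mapsto S_\varepsilon(\alpha,\beta)\) is \(\phi_{\alpha,\beta}-\phi_{\alpha,\alpha}\), where the \(\phi\)'s are the entropic dual potentials. The factor \(\tfrac12\) on \(\operatorname{OT}_\varepsilon(\alpha,\alpha)\) cancels against the symmetric doubling of its variation.
\item For quadratic cost \(c=\tfrac12\|y-\bar y\|^2\), differentiating the Schrödinger potential equations gives \(\nabla\phi_{\alpha,\beta}(y)=y-T^\varepsilon_{\alpha,\beta}(y)\), with \(T^\varepsilon\) the barycentric projection. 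Hence the Wasserstein gradient is \(\nabla_y(\phi_{\alpha,\beta}-\phi_{\alpha,\alpha})=T^\varepsilon_{\alpha,\alpha}-T^\varepsilon_{\alpha,\beta}\).
\item The velocity \eqref{eq:condition_sinkhorn_velocity} is therefore exactly minus this gradient: \(v_t(x,\cdot)=-\nabla_y\,\delta S_\varepsilon/\delta\alpha\) evaluated at \(\alpha=q_{t,x}\). Establishing this identification is the main technical point, and it is where smoothness of the potentials (guaranteed by the entropic regularization for smooth costs) is used.
\item Because the continuity equation has zero \(x\)-velocity and the \(\mu\) marginal is preserved, it decouples into the fiberwise equations \(\partial_t q_{t,x}+\nabla_y\cdot(q_{t,x}v_t(x,\cdot))=0\).
\item For each \(x\), the chain rule for curves of measures and integration by parts in \(y\) give
\[
\frac{d}{dt}S_\varepsilon(q_{t,x},p_x)=\int \nabla_y\frac{\delta S_\varepsilon}{\delta\alpha}\cdot v_t\,dq_{t,x}=-\|v_t(x,\cdot)\|^2_{L^2(q_{t,x})}.
\]
\end{itemize}

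\emph{Main obstacle.} The hardest step is justifying the interchange of \(d/dt\) with the \(\mu\)-integral. I would first assume compact support in \(y\), uniformly in \(x\) and \(t\). Then the potentials and their gradients are uniformly bounded: the Sinkhorn potentials inherit the cost's Lipschitz and smoothness bounds on compact sets. This makes \(\|v_t\|\) uniformly bounded, so dominated convergence (or integrating in time and using Fubini) gives
\[
\frac{d}{dt}\mathcal{S}^{\mathrm{cond}}_\varepsilon(q_t,p)=-\int\|v_t(x,\cdot)\|^2_{L^2(q_{t,x})}\,\mu(dx)\le 0.
\]
I would also assume existence of the flow as a hypothesis rather than prove it, consistent with the phrase ``along the corresponding population W-Flow dynamics''.
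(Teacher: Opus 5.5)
Your proposal is correct and follows the same route as the paper's proof. You apply the Feydy et al.\ nonnegativity/unique-zero property of \(S_\varepsilon\) and the fixed-condition W-Flow dissipation identity on each output fiber, then integrate over \(\mu\), while supplying what the paper leaves implicit: the identification \(v_t(x,\cdot)=-\nabla_y\,\delta S_\varepsilon/\delta\alpha\) via \(\nabla\phi_{\alpha,\beta}=\mathrm{id}-T^\varepsilon_{\alpha,\beta}\) for the cost \(\tfrac12\|y-\bar y\|^2\), the fiberwise decoupling of the continuity equation, and the justified interchange of \(d/dt\) with the \(\mu\)-integral. Like the paper, you read the final sentence as a statement about the zero set of \(\mathcal{S}^{\mathrm{cond}}_\varepsilon\), which is the only reading under which it holds: \(v_t=0\) alone does not force \(q_{t,x}=p_x\), since a Dirac mass at the mean of \(p_x\) has \(T^\varepsilon_{q,p}=T^\varepsilon_{q,q}\) on its support and is therefore stationary.
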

\emph{Proof:}
The zero-set statement follows by applying the nonnegativity and unique-zero property of the Sinkhorn divergence to each fixed \(x\) and integrating over \(\mu\).
The derivative identity is the same fixed-condition W-Flow identity, again integrated over \(\mu\).
\hfill\(\square\)

The proposition gives the population conditional optimal-transport picture needed for channel simulation: equality is enforced for almost every fixed input.
Matching a joint sample cloud under an arbitrary joint cost does not enforce this condition-wise equality.

The implemented algorithm approximates this population object in two ways.
First, it estimates the barycentric velocity from finite minibatch Sinkhorn couplings.
Second, it projects the explicit particle update back into the neural generator class through detached-target regression.
The resulting update differs from exact gradient descent on \(\mathcal{S}^{\mathrm{cond}}_{\varepsilon}\) with respect to \(\theta\).
The population conditional flow therefore serves as a fixed-point guide for the projected training rule.
If the generator family cannot realize all conditional laws \(p_x\), the practical fixed point is the best representable conditional surrogate reached by this projected training dynamics.
For anchors \(x_i\sim\mu\), the estimator draws multiple generated samples
\(\hat y_{i,k}=g_\theta(x_i,z_{i,k})\), multiple positive channel samples
\(y_{i,j}\sim p(\cdot\mid x_i)\), and an independent generated reference batch
\(\hat y'_{i,\ell}=g_\theta(x_i,z'_{i,\ell})\).
Sinkhorn barycentric projections are then computed separately for each anchor:
\begin{equation}
v_{i,k}
=
T^{\varepsilon}_{\hat Q_i,P_i}(\hat y_{i,k})
-
T^{\varepsilon}_{\hat Q_i,\hat Q'_i}(\hat y_{i,k}),
\end{equation}
followed by the detached regression target
\begin{equation}
\tilde y_{i,k}
=
\mathrm{stopgrad}
\left(
\hat y_{i,k}+\eta v_{i,k}
\right).
\end{equation}
Table~\ref{tab:condition-wise-sinkhorn-algorithm} summarizes the resulting training update.

\begin{table}[t]
\caption{\textbf{Condition-wise Sinkhorn drifting update.}}
\label{tab:condition-wise-sinkhorn-algorithm}
\centering
\tablestyle{3.8pt}{1.04}
\footnotesize
\begin{tabular}{@{}p{0.96\columnwidth}@{}}
\toprule
\textbf{Input:} generator \(g_\theta\), channel sampler \(p(\cdot\mid x)\), anchors \(\{x_i\}_{i=1}^{B}\), generated count \(K_g\), positive count \(K_p\), reference count \(K_r\), drift scale \(\eta\). \\
\midrule
1. Draw \(\hat y_{i,k}=g_\theta(x_i,z_{i,k})\), \(k=1,\ldots,K_g\). \\
2. Draw positive samples \(y_{i,j}\sim p(\cdot\mid x_i)\), \(j=1,\ldots,K_p\). \\
3. Draw generated reference samples \(\hat y'_{i,\ell}=g_\theta(x_i,z'_{i,\ell})\), \(\ell=1,\ldots,K_r\). \\
4. For each anchor \(i\), solve two output-space Sinkhorn problems, \(\hat Q_i\!\leftrightarrow\!P_i\) and \(\hat Q_i\!\leftrightarrow\!\hat Q'_i\). \\
5. Form barycentric velocity \(v_{i,k}=T^\varepsilon_{\hat Q_i,P_i}(\hat y_{i,k})-T^\varepsilon_{\hat Q_i,\hat Q'_i}(\hat y_{i,k})\). \\
6. Regress \(g_\theta(x_i,z_{i,k})\) toward \(\mathrm{stopgrad}(\hat y_{i,k}+\eta v_{i,k})\). \\
\bottomrule
\end{tabular}
\end{table}

\paragraph{Local conditional approximation.}
The exact estimator above is natural for simulator channels where repeated draws at the same \(x_i\) are available.
Measured datasets often contain only one or a few observations per condition.
In that setting, the same geometry can be approximated by replacing the pointwise conditional law with a local kernel estimate in condition space.
Given measured pairs \(\{(x_r,y_r)\}_{r=1}^{N}\), define
\begin{equation}
\alpha_r(x_0)
=
\frac{K_h(x_r,x_0)}
{\sum_{\ell=1}^{N} K_h(x_\ell,x_0)},
\qquad
\hat p_{h,x_0}
=
\sum_{r=1}^{N}\alpha_r(x_0)\delta_{y_r},
\end{equation}
where \(K_h\) is a bandwidth-controlled kernel on transmitted symbols or channel-state features.
For a generated batch at anchor \(x_0\), condition-wise Sinkhorn is then replaced by a weighted local Sinkhorn problem between \(\{\hat y_k=g_\theta(x_0,z_k)\}\) and \(\hat p_{h,x_0}\).
Equivalently, a generated local reference law
\begin{equation}
\hat q_{\theta,h,x_0}
=
\sum_{r=1}^{N}\alpha_r(x_0)
\frac{1}{K}\sum_{k=1}^{K}\delta_{g_\theta(x_r,z_{r,k})}
\end{equation}
can be used when the learned generator should match the same local conditioning bandwidth as the measured target estimate.
The bandwidth \(h\) controls the usual bias--variance tradeoff: small neighborhoods preserve conditioning but increase estimator variance, while large neighborhoods reduce variance but approach a global transport problem.
The experiments in this paper use simulator access and therefore the exact repeated-condition estimator.
The local formulation is included only as a measured-data direction.
Bandwidth selection and neighborhood bias remain open.

The major operational advantage is straightforward.
Once training is complete, sampling requires only a single evaluation of \(g_\theta(x,z)\), which produces \(\hat y\) directly.
Sampling avoids reverse chains, ODE solvers, and per-sample iterative schedules at inference.

\section{Experimental Setup}
\subsection{Channel models}
We use four stochastic channel families.
All channels are implemented as real-valued maps \(x\mapsto y\), but the coordinate interpretation differs by channel.
AWGN and Rayleigh use real-valued vectors directly and have dimension \(n=7\) in the generator-level benchmark.
SSPA and TDL use paired in-phase/quadrature (I/Q) coordinates with \(n=8\), corresponding to four complex symbols.
Noise variables are sampled independently across real coordinates unless the deterministic channel transformation or tapped-delay convolution introduces coupling.
The generator-level comparison is reported on AWGN, Rayleigh, SSPA, and TDL.
AWGN, Rayleigh, and SSPA include the plain direct-drifting reference row, while the TDL drifting-family comparison focuses on the two Sinkhorn drift fields used in the final ablation.
For TDL, the DDPM, DDIM, and WGAN direct-output SWD rows are evaluated from the trained baseline checkpoints used in the SER-curve study.

\textbf{AWGN:}
\begin{equation}
y = x + n,
\qquad
n \sim \mathcal{N}(0,\sigma^2 I).
\end{equation}
This is the standard additive white Gaussian noise reference channel and serves as the simplest near-identity baseline.

\textbf{Rayleigh:}
the transmitted symbol is multiplied elementwise by a random fading amplitude and then corrupted by additive Gaussian noise:
\begin{equation}
y = h \odot x + n,
\end{equation}
where \(\odot\) denotes elementwise multiplication,
\begin{equation}
h_k = \frac{1}{\sqrt{2}}\sqrt{u_k^2 + v_k^2},
\qquad
u_k,v_k \overset{\text{i.i.d.}}{\sim} \mathcal{N}(0,1),
\end{equation}
and \(n \sim \mathcal{N}(0,\sigma^2 I)\).

Each real component is therefore scaled by an independent Rayleigh-distributed magnitude before noise is added.
The downstream encoder and decoder are not given \(h_k\), and no explicit equalization is applied.
As implemented here, Rayleigh is a real-valued vector channel.
The paired-I/Q complex representation is used for SSPA and TDL.

\textbf{SSPA:}
the input is passed through a smooth solid-state power amplifier nonlinearity before additive noise is applied.
In this case, the real-valued representation is interpreted as paired I/Q components.
Let
\begin{equation}
r = \sqrt{x_I^2 + x_Q^2}
\end{equation}
denote the input amplitude.
The multiplicative gain factor used in the benchmark is
\begin{equation}
a_{\mathrm{SSPA}}(r)
=
\frac{G}{\left(1+\left(\frac{Gr}{A_0}\right)^{2p}\right)^{1/(2p)}},
\end{equation}
with default parameters \(p=3\), \(A_0=1.5\), and \(G=5.0\).
Here \(a_{\mathrm{SSPA}}(r)\) is a gain, not the output amplitude.
The corresponding output amplitude is \(A_{\mathrm{out}}(r)=a_{\mathrm{SSPA}}(r)r\), the Rapp-style amplitude-to-amplitude law used in the benchmark.
The noiseless amplifier output is $\tilde{x} = a_{\mathrm{SSPA}}(r)\,x,$
and the observed channel output is
\begin{equation}
y = \tilde{x} + n,
\qquad
n \sim \mathcal{N}\!\left(0,\frac{\sigma^2}{2}I\right).
\end{equation}
This channel introduces amplitude-dependent nonlinear distortion while preserving the input direction in the I/Q plane.

\textbf{TDL:}
The TDL channel is a compact short-block approximation of the Third Generation Partnership Project (3GPP) Technical Report 38.901 TDL-D profile \cite{tr38901}.
The real vector is interpreted as a complex codeword \(s_m=x_{2m}+j x_{2m+1}\), \(m=0,\ldots,N_c-1\), with \(N_c=4\) complex channel uses in the symbolic coding runs.
For each codeword, the channel samples a finite set of complex taps and applies circular convolution,
\begin{equation}
\tilde{s}_m
=
\sum_{\ell=0}^{L_h-1}
h_\ell s_{(m-d_\ell)\bmod N_c}.
\end{equation}
The observed output is
\begin{equation}
y_m = \tilde{s}_m+n_m,
\qquad
n_m\sim \mathcal{CN}(0,\sigma^2).
\end{equation}
The implemented profile uses the TDL-D line-of-sight structure with first-tap \(K\)-factor \(13.3\,\mathrm{dB}\).
Because the codeword is short, the normalized path delays from the standard profile are rounded to a small set of effective circular shifts and path powers are renormalized after aggregation.
This keeps the channel compatible with the short symbolic autoencoder while preserving the random multipath mixture structure.
We use this TDL configuration as a compact channel-with-memory stress test.
It is intentionally smaller than a full OFDM or long-block TDL evaluation.

\subsection{Training protocol}
The experiments are organized into two complementary protocols.
First, the generator-level protocol compares one-shot drifting-family generators against established conditional DDPM, DDIM, and WGAN references on AWGN, Rayleigh, SSPA, and TDL.
Second, the drift-field ablation holds the drifting network fixed and compares global/joint Sinkhorn drifting with condition-wise Sinkhorn drifting.
The final tables keep the drift-field comparison focused on direct drifting and the two Sinkhorn transport fields.
This second protocol is reported over 100 seeds for the main drift-field runs and is evaluated with direct SWD, anchor-conditioned metrics, and downstream symbolic SER/BER.
It is the controlled comparison in this paper because the generator architecture, optimizer family, and one-shot inference path are held fixed across drift fields.
The DDPM, DDIM, and WGAN rows provide best-effort reference baselines using stable model-family settings.
A matched-capacity ranking against drifting would require an additional equal-parameter and equal-budget study.
For SSPA, we additionally report a 30-seed update-budget-controlled condition-wise Sinkhorn operating point because the anchor-conditioned diagnostics stabilize at a smaller update budget than the long diffusion-style SSPA preset.
For the direct-output SWD table, TDL W-Flow rows come from the drift-field ablation, and TDL DDPM/DDIM/WGAN rows are obtained by checkpoint-only direct-SWD evaluation of the corresponding trained baseline implants.

\begin{table*}[t]
\centering
\caption{\textbf{Simulation setup used for the journal experiments.} The listed \(E_b/N_0\) values are the nominal channel-model and autoencoder training points. Generator-level AWGN/Rayleigh/SSPA reference rows use ten seeds; TDL reference rows use 30 checkpoint evaluations. W-Flow rows use 100 seeds except for the 30-seed SSPA update-budget-controlled operating point. All SWD evaluations use 128 random projections.}
\label{tab:simulation-setup}
\tablestyle{4.2pt}{1.04}
\tablefontsize
\begin{tabular}{@{}lccccccc@{}}
\toprule
Channel & \(n\) & \(M_{\mathrm{msg}}\) & \(R\) & \(E_b/N_0\) & Train samples / batch / epochs & Eval. samples & Main use \\
\midrule
AWGN & 7 & 16 & \(4/7\) & \(5\,\mathrm{dB}\) & \(10^7 / 5000 / 30\) & \(10^6\) & Reference + W-Flow \\
Rayleigh & 7 & 16 & \(4/7\) & \(12\,\mathrm{dB}\) & \(10^7 / 5000 / 30\) & \(10^6\) & Reference + W-Flow \\
SSPA reference & 8 & 64 & \(6/8\) & \(8\,\mathrm{dB}\) & \(10^7 / 4096 / 160\) & \(10^6\) & DDPM/DDIM/WGAN + direct drifting \\
SSPA W-Flow operating point & 8 & 64 & \(6/8\) & \(8\,\mathrm{dB}\) & \(1.2{\times}10^5 / 4096 / 160\) & \(10^6\) & Reported condition-wise Sinkhorn \\
TDL & 8 & 16 & \(4/8\) & \(10\,\mathrm{dB}\) & \(1.2{\times}10^5 / 512 / 60\) & \(10^5\) & Reference + W-Flow \\
\bottomrule
\end{tabular}
\end{table*}

\paragraph{Channel and data settings.}
Table~\ref{tab:simulation-setup} collects the channel dimensions, message alphabets, training points, and generator-evaluation budgets.
The noise standard deviations induced by these \((E_b/N_0,R)\) pairs are \(0.5260\), \(0.2350\), \(0.3251\), and \(0.3162\) for AWGN, Rayleigh, SSPA, and TDL.
The SSPA condition-wise Sinkhorn row uses the compact update-budget-controlled operating point.
Table~\ref{tab:sspa-budget-sensitivity} reports the corresponding SSPA budget sensitivity across a small set of diagnostic checkpoints.
Downstream symbolic autoencoders are trained through either the analytic channel or a learned channel implant and are always evaluated on the analytic channel.
Within each channel, the autoencoder architecture, optimizer schedule, message alphabet, rate, training \(E_b/N_0\), and epoch budget are fixed; only the channel implant used during training changes.
Reported downstream uncertainty is taken over the complete seeded training and evaluation pipeline.

\paragraph{Model settings.}
Drifting uses a conditional MLP with latent dimension \(16\), hidden width \(128\), two hidden layers, \texttt{SiLU} activations, and Adam with learning rate \(10^{-3}\).
The base drifting objective uses drift scale \(1.0\), median-heuristic bandwidth selection, minimum bandwidth \(0.2\), maximum drift norm \(2.0\), and repulsive weight \(1.0\).
The Sinkhorn drift-field variants use the same direct-output generator architecture and detached-target update, but replace the training drift field.
The joint Sinkhorn row uses a Sinkhorn/W-Flow drift in joint condition-output features, while the condition-wise row uses the per-anchor conditional Sinkhorn field in \eqref{eq:condition_sinkhorn_velocity}.
Unless otherwise stated, Sinkhorn variants use \(10\) Sinkhorn iterations, automatic entropic scale selection with minimum \(\varepsilon=10^{-3}\), four generated samples, four positive channel samples, and four generated reference samples per condition anchor.

For AWGN, Rayleigh, SSPA, and TDL, diffusion models \(p(y\mid x)\) directly with \(v\)-prediction, an unclipped cosine variance schedule, \(100\) diffusion steps, and exponential moving average decay \(0.9\). Hidden widths are \(110\) (AWGN and SSPA) and \(128\) (Rayleigh and TDL). AWGN and Rayleigh use a two-stage learning-rate schedule (\(10\) epochs at \(10^{-3}\), then \(20\) epochs at \(10^{-4}\)), while SSPA and TDL use a constant \(10^{-4}\) for \(160\) and \(60\) epochs, respectively. We report DDPM and DDIM with \(100\), \(50\), \(20\), and \(10\) denoising steps.

WGAN uses a conditional Wasserstein objective with RMSprop at \(10^{-4}\) for generator and critic, five critic updates per generator update, and weight clipping at \(0.01\). Hidden widths are \(128\) (AWGN) and \(256\) (Rayleigh, SSPA, and TDL). WGAN is reported in direct output space wherever it is compared against direct-output channel generators.

\paragraph{Experimental fairness.}
The benchmark is intended as a practical comparative study, with stable model-family implementations prioritized over strict matched-capacity control.
Architectures and training schedules were chosen as stable implementations for each model family.
Identical parameter counts and compute budgets across drifting, diffusion, and WGAN are outside the benchmark design.
The Sinkhorn rows are therefore reported as a drift-field ablation under a matched drifting architecture and training budget.
They are compared to diffusion and WGAN wherever the observable metric and channel protocol overlap.
The downstream W-Flow table focuses on the learned drifting/Sinkhorn surrogates and the analytic-channel reference training condition.
The SER curves add WGAN and DDIM-100 when all corresponding channel implants are available.
The central ablation is the drift-field geometry: joint/global transport versus condition-wise Sinkhorn transport under the same one-shot generator.

As an additional diagnostic, Table~\ref{tab:model-size} reports inference-network parameter counts used in this benchmark.
The Sinkhorn rows have the same parameter count as direct drifting because the transport computation changes the training update, not the deployed generator.

\begin{table}[t]
\centering
\caption{\textbf{Model parameter counts in the benchmark setup.} Counts refer to inference-time networks. Sinkhorn drifting uses the same one-shot generator as direct drifting. Its Sinkhorn couplings are training-time computations and add no inference-time parameters. DDPM and DDIM share the same denoiser weights. WGAN rows report generator-side parameters.}
\label{tab:model-size}
\tablestyle{4pt}{1.03}
\tablefontsize
\begin{tabular}{@{}lcccc@{}}
\toprule
Model & AWGN & Rayleigh & SSPA & TDL \\
\midrule
Direct drifting & 20{,}487 & 20{,}487 & 20{,}744 & 20{,}744 \\
Sinkhorn drifting variants & 20{,}487 & 20{,}487 & 20{,}744 & 20{,}744 \\
Diffusion denoiser (DDPM / DDIM) & 59{,}847 & 74{,}247 & 60{,}178 & 74{,}632 \\
WGAN generator & 19{,}335 & 71{,}431 & 72{,}200 & 72{,}200 \\
\bottomrule
\end{tabular}
\end{table}

\paragraph{Seeding and reporting.}
  The diffusion/WGAN reference benchmark results are reported as mean \(\pm\) standard deviation over ten seeds.
  The Sinkhorn drift-field ablation is reported over 100 seeds unless stated otherwise. The SSPA update-budget check uses 30 seeds.
  We use standard deviation for the ten-seed reference rows to describe run-to-run variability.
  W-Flow tables use mean \(\pm\) standard error because the main purpose of those rows is to compare seed-averaged drift-field variants.
  Each run uses deterministic initialization to reduce run-to-run variance unrelated to model stochasticity.

\subsection{Evaluation metric}
The diffusion/WGAN/drifting reference comparison emphasizes generator-level distributional accuracy and timing, while the Sinkhorn drift-field ablation adds anchor-conditioned diagnostics and downstream symbolic SER/BER.
The primary generator metric is direct-output sliced Wasserstein distance (SWD).
For generated channel outputs, we report \(\mathrm{SWD}\!\left(\{\hat{y}_i\}, \{y_i\}\right)\).
Given true samples \(P=\{v_i\}_{i=1}^N\) and generated samples \(Q=\{\hat v_i\}_{i=1}^N\), SWD is estimated by drawing random unit directions \(\theta_m\), projecting both clouds onto those directions, computing the one-dimensional Wasserstein distance for each projection, and averaging over projections:
\begin{equation*}
\mathrm{SWD}(P,Q)
=
\frac{1}{M}
\sum_{m=1}^M
W_1\!\left(\{\langle \theta_m,v_i\rangle\}_{i=1}^N,\{\langle \theta_m,\hat v_i\rangle\}_{i=1}^N\right).
\end{equation*}
Thus SWD compares the generated and true sample clouds through many one-dimensional views and captures geometric mismatch more faithfully than a purely bin-based marginal comparison. In our experiments, SWD is estimated using \(128\) random one-dimensional projections on AWGN,
  Rayleigh, SSPA, and TDL.
The projection directions are sampled from an isotropic Gaussian distribution, normalized to unit length, and generated from the deterministic metric seed associated with the run.
SWD is used primarily for comparability with prior diffusion-channel benchmarks \cite{paper2309,kimicc2024}. Compared with histogram-based \(L_1\) distance, it also better captures sample geometry in \(\mathbb{R}^n\).
SWD provides a geometry-aware comparison of global output sample clouds.
It leaves decision-relevant behavior such as error rates unmeasured.
Those depend on the joint structure of \((x,y)\).
For communication use, two generators with similar global sample-cloud discrepancy can induce different conditional means, conditional covariance errors, and downstream symbol-error rates.
The Sinkhorn drift-field study therefore also includes anchor-conditioned moment metrics and symbolic channel-implant SER/BER checks when evaluating the proposed drifting variants.
We therefore report both downstream-dependent and generator-level metrics.
Downstream SER and BER under analytic-channel evaluation measure the utility of a surrogate inside the trained communication system, but they also depend on the chosen encoder, decoder, optimizer, and training point.
SWD is independent of a downstream receiver and remains useful for comparing output distributions across generators, while global SWD can miss fixed-input conditional errors.
Anchor-conditioned metrics bridge these views by probing repeated outputs at fixed transmitted symbols.

\section{Benchmark Results}
\subsection{Unified SWD benchmark across channels}
Table~\ref{tab:wflow-swd-baselines} reports a unified direct-output SWD comparison across AWGN, Rayleigh, SSPA, and TDL.
The reference rows summarize the diffusion and WGAN baselines.
The drifting-family rows contain direct drifting and the two Sinkhorn drift fields under a matched one-shot architecture.
The controlled conclusion from this table is the ranking inside the drifting-family block, while the diffusion and WGAN rows provide reference context from their own stable training protocols.
DDIM-100 gives the lowest direct-space SWD on AWGN, Rayleigh, and SSPA among the matched reference rows.
Among the Sinkhorn variants, condition-wise Sinkhorn is strongest under direct SWD on AWGN, Rayleigh, and TDL, and is numerically close to direct drifting on SSPA under the compact update-budget run.
Within the full drifting-family block, condition-wise Sinkhorn is lowest on AWGN, Rayleigh, and TDL, while direct drifting remains lowest on SSPA under this global metric.
The drift-field ranking is therefore channel-dependent.
\begin{table*}[t]
\centering
\caption{\textbf{Direct-output SWD comparison for reference baselines and main drifting variants.} Lower is better. Reference rows report mean $\pm$ standard deviation from the diffusion/WGAN benchmark or from checkpoint-only direct-SWD evaluation. The drifting-family rows retain direct drifting and the two Sinkhorn transport fields used for the main ablation. W-Flow rows report the available-seed mean $\pm$ standard error. Bold marks the best drifting-family row per channel under the reported mean. Dashes mark unavailable matched generator-level SWD values.}
\label{tab:wflow-swd-baselines}
\tablestyle{5.8pt}{1.08}
\footnotesize
\begin{tabular}{@{}lcccc@{}}
\toprule
Method & AWGN & Rayleigh & SSPA & TDL \\
\midrule
\rowcolor[gray]{0.9} \multicolumn{5}{l}{\textit{Diffusion/WGAN reference baselines}} \\
\headspace WGAN & $0.0198\err{0.0050}$ & $0.0171\err{0.0050}$ & $0.0287\err{0.0156}$ & $0.0289\err{0.0063}$ \\
\headspace DDPM & $0.0070\err{0.0004}$ & $0.0079\err{0.0008}$ & $0.0032\err{0.0005}$ & $0.0297\err{0.0011}$ \\
\headspace DDIM-100 & $0.0042\err{0.0006}$ & $0.0044\err{0.0007}$ & $0.0024\err{0.0004}$ & $0.0248\err{0.0009}$ \\
\midrule
\rowcolor[gray]{0.9} \multicolumn{5}{l}{\textit{Drifting family}} \\
\headspace Direct drifting & $0.0100\err{0.0007}$ & $0.0085\err{0.0012}$ & $\mathbf{0.0060}\err{0.0008}$ & -- \\
\headspace Joint Sinkhorn & $0.0148\err{0.0002}$ & $0.0152\err{0.0002}$ & $0.0117\err{0.0002}$ & $0.0966\err{0.0005}$ \\
\headspace Condition-wise Sinkhorn & $\mathbf{0.0058}\err{0.0001}$ & $\mathbf{0.0073}\err{0.0001}$ & $0.0071\err{0.0002}$ & $\mathbf{0.0118}\err{0.0002}$ \\
\bottomrule
\end{tabular}
\end{table*}

Timing follows the expected pattern.
Drifting and WGAN operate at roughly \(0.1\,\mu\mathrm{s}\) per sample on the local graphics processing unit (GPU) timing setup, while DDPM and DDIM require microsecond-to-tens-of-microseconds sampling times depending on the sampler and the number of denoising steps.
Thus, even when diffusion achieves the strongest SWD, one-shot generators retain a substantial practical latency advantage.
The W-Flow variants retain the same one-shot inference path as direct drifting.
Their cost differences are training-time drift-field costs.
Table~\ref{tab:timing-cuda} therefore reports W-Flow ablation rows under the same \(2\%\) local GPU timing protocol.
The timing run uses PyTorch on a local NVIDIA RTX 5060 Ti GPU in standard float32 execution, without CUDA graph capture, \texttt{torch.compile}, or mixed precision.
Training time is measured by running the actual training loop on \(2\%\) of the configured dataset size, rounded to full batches, and extrapolating by the ratio of timed optimizer steps to full optimizer steps.
This includes synthetic channel sampling, forward and backward passes, optimizer updates, and the Sinkhorn computations used by W-Flow rows.
It excludes checkpoint writing and downstream symbolic autoencoder training.
Inference timing uses batch size \(512\), \(40\) batches per repeat, two warmup repeats, seven timed repeats, and CUDA synchronization around each repeat.
It includes drawing the conditioning inputs and evaluating the generator.
For diffusion rows, inference includes the complete DDPM or DDIM denoising trajectory and the same conditioning overhead.
The total column adds projected training time to the time needed to generate the evaluation-sample budget listed in Table~\ref{tab:simulation-setup}.
On AWGN, Rayleigh, and SSPA, the condition-wise Sinkhorn row is faster to train than joint Sinkhorn because it solves many small same-condition transport problems instead of forming a global minibatch interaction matrix.
This agrees with the leading coupling cost: with \(B\) anchors, \(K_g\) generated particles, \(K_p\) positives, \(K_r\) generated references, and \(I\) Sinkhorn iterations, condition-wise Sinkhorn scales as \(O(I B K_g(K_p+K_r))\), while joint Sinkhorn over the expanded batch scales as \(O(I B^2 K_g(K_p+K_r))\).
At inference, W-Flow, direct drifting, and WGAN retain one generator call per sample, whereas DDPM/DDIM use \(T\) denoising-network evaluations.

\begin{table*}[t]
\centering
\caption{\textbf{Projected training and inference timing on an NVIDIA RTX 5060 Ti GPU.} Training hours are extrapolated from a 2\% timing run. Inference is reported as time per sample.}
\label{tab:timing-cuda}
\tablestyle{3pt}{1.02}
\tablefontsize
\resizebox{\textwidth}{!}{%
\begin{tabular}{@{}lcccccccccccc@{}}
\toprule
Method & \multicolumn{3}{c}{AWGN} & \multicolumn{3}{c}{Rayleigh} & \multicolumn{3}{c}{SSPA} & \multicolumn{3}{c}{TDL} \\
\cmidrule(lr){2-4} \cmidrule(lr){5-7} \cmidrule(lr){8-10} \cmidrule(lr){11-13}
 & Train [h] & Inf. & Total [h] & Train [h] & Inf. & Total [h] & Train [h] & Inf. & Total [h] & Train [h] & Inf. & Total [h] \\
\midrule
\rowcolor[gray]{0.9} \multicolumn{13}{l}{\textit{One-shot generators}} \\
Drifting (dir.) & 1.299 & 0.114 us & 1.300 & 1.299 & 0.117 us & 1.300 & 5.656 & 0.118 us & 5.656 & 0.007 & 0.116 us & \textbf{0.007} \\
Joint Sinkhorn & 0.917 & 0.116 us & 0.918 & 0.919 & 0.115 us & 0.920 & 3.956 & 0.117 us & 3.956 & 0.010 & 0.116 us & 0.010 \\
Condition-wise Sinkhorn & 0.048 & 0.115 us & \textbf{0.048} & 0.048 & 0.115 us & \textbf{0.049} & 0.296 & 0.117 us & \textbf{0.297} & 0.009 & 0.115 us & 0.009 \\
WGAN & 0.064 & 0.108 us & 0.064 & 0.085 & 0.108 us & 0.086 & 0.478 & 0.107 us & 0.478 & 0.022 & 0.108 us & 0.022 \\
\midrule
\rowcolor[gray]{0.9} \multicolumn{13}{l}{\textit{Diffusion samplers}} \\
DDPM & 0.020 & 0.046 ms & 0.147 & 0.020 & 0.045 ms & 0.145 & 0.124 & 0.047 ms & 0.253 & 0.005 & 0.045 ms & 0.007 \\
DDIM-100 & 0.020 & 0.039 ms & 0.128 & 0.020 & 0.039 ms & 0.128 & 0.124 & 0.039 ms & 0.231 & 0.005 & 0.039 ms & 0.006 \\
DDIM-50 & 0.020 & 0.019 ms & 0.073 & 0.020 & 0.019 ms & 0.073 & 0.124 & 0.019 ms & 0.177 & 0.005 & 0.019 ms & 0.006 \\
DDIM-20 & 0.020 & 0.008 ms & 0.042 & 0.020 & 0.008 ms & 0.042 & 0.124 & 0.008 ms & 0.145 & 0.005 & 0.008 ms & 0.006 \\
DDIM-10 & 0.020 & 0.004 ms & 0.031 & 0.020 & 0.004 ms & 0.031 & 0.124 & 0.004 ms & 0.135 & 0.005 & 0.004 ms & 0.005 \\
\bottomrule
\end{tabular}
}
\par\vspace{0.25em}
\begin{minipage}{0.96\textwidth}
\footnotesize\raggedright
All rows use the same 2\% extrapolation protocol. In the one-shot block, bold total times mark the lowest projected total for each channel. The W-Flow rows use the same one-shot generator architecture as direct drifting. Their training-time differences come only from the drift-field computation.
\end{minipage}
\end{table*}

\subsection{Condition-wise drift-field interpretation}
The W-Flow-inspired ablation isolates the drift-field design problem.
Joint Sinkhorn drift and condition-wise Sinkhorn drift use the same one-shot generator architecture and differ only in how training particles are moved.
Geometrically, joint Sinkhorn builds a transport field from \((x,y)\)-features, while condition-wise Sinkhorn solves separate output-space transport problems at fixed \(x\).
This distinction is visible in the mixed metric behavior.
Condition-wise Sinkhorn is the strongest W-Flow row under direct SWD on AWGN, Rayleigh, and TDL, while SSPA needs the compact update-budget run because the corrected same-condition field converges earlier than the long diffusion-style preset.
Fig.~\ref{fig:conditional-fiber-diagnostic} shows the corresponding fixed-condition diagnostic for SSPA.
For three transmitted-symbol anchors, the joint Sinkhorn field misses both the conditional mean and the local spread of the analytic output fiber.
Condition-wise Sinkhorn keeps the generated cloud close to the analytic same-\(x\) cloud.
In this channel-simulator setting, the downstream checks below are the more relevant test.
Accordingly, global SWD is treated as a comparability diagnostic and supplemented by conditional or downstream metrics for model selection.

\begin{figure*}[t]
\centering
\includegraphics[width=0.66\textwidth]{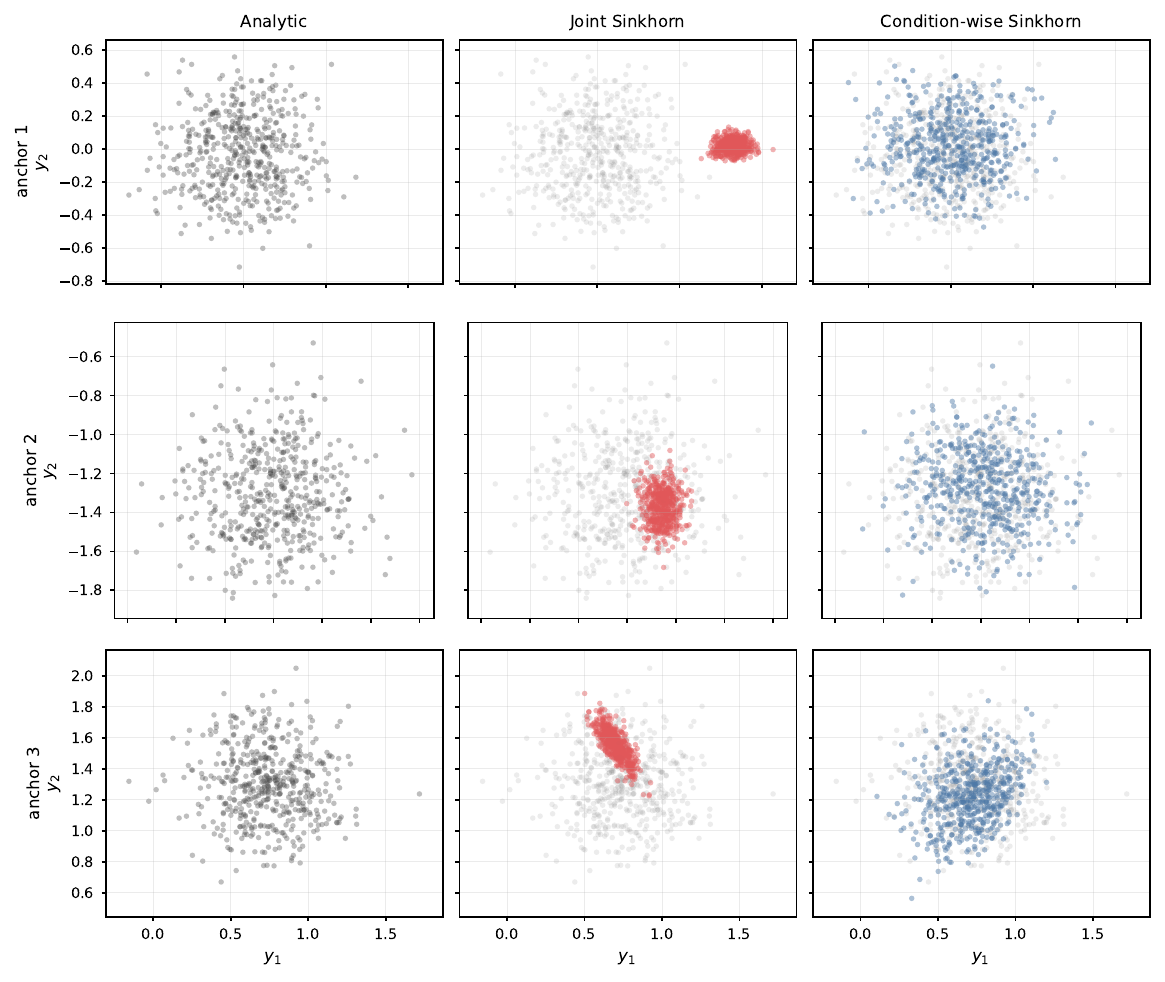}
\caption{\textbf{Fixed-condition SSPA output fibers.} Each row fixes one transmitted-symbol anchor \(x_i\) and plots \(512\) repeated channel outputs in the first I/Q plane; the three anchors are selected from \(256\) candidate transmitted symbols. The analytic channel is shown in gray. The learned columns overlay generated samples from joint Sinkhorn and condition-wise Sinkhorn on the same analytic cloud. Joint transport can match aspects of the global output distribution while displacing the fixed-\(x\) fiber. Condition-wise Sinkhorn better preserves the conditional output law.}
\label{fig:conditional-fiber-diagnostic}
\end{figure*}

\subsection{Conditional metrics and symbolic coding}
Table~\ref{tab:wflow-conditional-metrics} reports the corresponding anchor-conditioned diagnostics.
The anchor metrics draw repeated outputs at fixed transmitted symbols and therefore probe the conditional law more directly than global SWD.
Direct SWD is reported in Table~\ref{tab:wflow-swd-baselines}, and downstream BER/SER is reported in Table~\ref{tab:wflow-ser-ber}.
For AWGN, Rayleigh, SSPA, and TDL, condition-wise Sinkhorn is the strongest evaluated drifting-family surrogate under the downstream BER/SER checks reported here.
The full curve comparison below shows that diffusion remains the strongest learned reference on Rayleigh, SSPA, and TDL.
On SSPA, the anchor-conditioned metrics motivate the compact update-budget run.
The full-preset row in Table~\ref{tab:sspa-budget-sensitivity} is retained as a sensitivity check for this saturated nonlinear channel.

\begin{table}[!t]
\centering
\caption{\textbf{Anchor-conditioned diagnostics for Sinkhorn drift fields.} Anchor SWD and Gaussian Wasserstein-2 (GW2) are computed from repeated samples at fixed channel inputs. Direct SWD and BER/SER are reported separately in Tables~\ref{tab:wflow-swd-baselines} and~\ref{tab:wflow-ser-ber}. Values are seed means. Lower is better.}
\label{tab:wflow-conditional-metrics}
\tablestyle{3.4pt}{1.02}
\tablefontsize
\begin{tabular}{@{}llcc@{}}
\toprule
Channel & Variant & Anchor SWD & Anchor GW2 \\
\midrule
AWGN & Joint Sinkhorn & 0.1199 & 0.4443 \\
AWGN & Condition-wise Sinkhorn & 0.1154 & 0.4277 \\
\midrule
Rayleigh & Joint Sinkhorn & 0.1723 & 0.6929 \\
Rayleigh & Condition-wise Sinkhorn & 0.1097 & 0.3820 \\
\midrule
SSPA & Joint Sinkhorn & 0.2174 & 0.7876 \\
SSPA & Condition-wise Sinkhorn & 0.0566 & 0.2376 \\
\midrule
TDL & Joint Sinkhorn & 0.3221 & 1.1871 \\
TDL & Condition-wise Sinkhorn & 0.1054 & 0.4444 \\
\bottomrule
\end{tabular}
\par\vspace{0.25em}
\begin{minipage}{0.92\columnwidth}
\footnotesize\raggedright
The SSPA condition-wise Sinkhorn row uses the same \(30\)-seed update-budget-controlled run as Table~\ref{tab:wflow-swd-baselines}.
\end{minipage}
\end{table}

Table~\ref{tab:sspa-budget-sensitivity} makes the SSPA budget choice explicit.
The compact condition-wise Sinkhorn result is aggregated over the same 30 seeds used in Tables~\ref{tab:wflow-swd-baselines} and~\ref{tab:wflow-ser-ber}.
The first two rows are single-seed local sanity checks, and the full-preset row is a sensitivity diagnostic rather than a model-selection candidate.

\begin{table*}[t]
\centering
\caption{\textbf{SSPA condition-wise Sinkhorn budget sensitivity.} The table reports the corrected condition-wise Sinkhorn field under increasing optimizer budgets. Updates are approximate optimizer steps, computed from dataset size, batch size, and epochs. Ratios divide the learned-vs-analytic anchor metric by the analytic-vs-analytic floor. The floor is estimated from two independent analytic channel sample sets at the same anchors. The first two rows are single-seed local diagnostics; the last two rows are aggregated cluster runs. Lower is better.}
\label{tab:sspa-budget-sensitivity}
\tablestyle{4.0pt}{1.03}
\tablefontsize
\begin{tabular}{@{}lcccccc@{}}
\toprule
Run & Updates & Seeds & Direct SWD & Anchor-SWD ratio & Anchor-GW2 ratio & SER at \(8\,\mathrm{dB}\) \\
\midrule
Short local & \(0.31\mathrm{k}\) & 1 & \(1.07{\times}10^{-1}\) & 5.14 & 4.10 & -- \\
Early compact & \(0.88\mathrm{k}\) & 1 & \(4.09{\times}10^{-2}\) & 4.50 & 3.88 & -- \\
Reported operating point & \(4.69\mathrm{k}\) & 30 & \(7.07{\times}10^{-3}\err{1.9{\times}10^{-4}}\) & \(1.13\err{0.005}\) & \(1.15\err{0.004}\) & \(9.33{\times}10^{-5}\err{1.4{\times}10^{-5}}\) \\
Full preset & \(390.6\mathrm{k}\) & 100 & \(1.53\err{0.02}\) & \(35.5\err{0.4}\) & \(60.1\err{0.7}\) & \(2.70{\times}10^{-2}\err{2.2{\times}10^{-3}}\) \\
\bottomrule
\end{tabular}
\par\vspace{0.25em}
\begin{minipage}{0.94\textwidth}
\footnotesize\raggedright
The reported SSPA condition-wise Sinkhorn model is the \(4.69\mathrm{k}\)-update operating point selected by anchor-conditioned diagnostics. The full-preset row is included only as a sensitivity diagnostic.
\end{minipage}
\end{table*}

The downstream symbolic results in Table~\ref{tab:wflow-ser-ber} use the learned channel surrogate for autoencoder training and the analytic channel for final evaluation.
The analytic row reports autoencoder training through the analytic channel.
Because training and evaluation are finite seeded runs, it can still deviate slightly from an ideal lower-bound interpretation.
We treat SER as the primary symbolic metric because the autoencoder message alphabet is finite.
BER is retained in the table as an additional bit-level diagnostic.
Within the Sinkhorn ablation table, condition-wise Sinkhorn is the strongest downstream drifting variant when the SSPA row uses the update-budget-controlled operating point.
The per-channel SER curves in Fig.~\ref{fig:wflow-all-ser-curves} add WGAN and diffusion references to this downstream comparison.
At the nominal training point, condition-wise Sinkhorn nearly reaches the analytic AWGN reference, while diffusion is the strongest learned baseline on Rayleigh, SSPA, and TDL.
On SSPA, the \(M_{\mathrm{msg}}=64\) update-budget-controlled operating point gives condition-wise Sinkhorn SER \(8.13\times10^{-5}\) at \(E_b/N_0=8\,\mathrm{dB}\), compared with \(7.50\times10^{-5}\) for WGAN, \(3.13\times10^{-5}\) for diffusion, and \(1.57\times10^{-5}\) for analytic-channel training.
The long-block check below tests the same AWGN implant outside the short symbolic-codeword setting.

\begin{table*}[t]
\centering
\caption{\textbf{Downstream symbolic coding metrics for Sinkhorn channel surrogates.} Autoencoders are trained through each channel surrogate and evaluated on the analytic channel. Values are mean $\pm$ standard error over seeds. Lower is better. Bold marks the best learned surrogate in each row, while the analytic channel is the non-learned reference training condition.}
\label{tab:wflow-ser-ber}
\tablestyle{3.4pt}{1.02}
\tablefontsize
\begin{tabular}{@{}llccc@{}}
\toprule
Channel & Metric & Analytic & Joint Sinkhorn & Condition-wise Sinkhorn \\
\midrule
AWGN & BER & $0.00293\err{4.0\times 10^{-5}}$ & $0.00388\err{5.5\times 10^{-5}}$ & $\mathbf{0.00302}\err{4.2\times 10^{-5}}$ \\
AWGN & SER & $0.00551\err{7.3\times 10^{-5}}$ & $0.00729\err{9.7\times 10^{-5}}$ & $\mathbf{0.00568}\err{7.5\times 10^{-5}}$ \\
\midrule
Rayleigh & BER & $0.00273\err{4.3\times 10^{-5}}$ & $0.0177\err{3.3\times 10^{-4}}$ & $\mathbf{0.00471}\err{8.2\times 10^{-5}}$ \\
Rayleigh & SER & $0.00514\err{8.0\times 10^{-5}}$ & $0.0331\err{5.6\times 10^{-4}}$ & $\mathbf{0.00881}\err{1.4\times 10^{-4}}$ \\
\midrule
SSPA & BER & $6.72\times 10^{-6}\err{1.9\times 10^{-6}}$ & -- & $\mathbf{4.90\times 10^{-5}}\err{7.5\times 10^{-6}}$ \\
SSPA & SER & $1.30\times 10^{-5}\err{3.5\times 10^{-6}}$ & -- & $\mathbf{9.33\times 10^{-5}}\err{1.4\times 10^{-5}}$ \\
\midrule
TDL & BER & $0.00158\err{2.1\times 10^{-5}}$ & $0.0107\err{3.7\times 10^{-4}}$ & $\mathbf{0.00293}\err{4.4\times 10^{-5}}$ \\
TDL & SER & $0.00297\err{3.6\times 10^{-5}}$ & $0.02\err{6.3\times 10^{-4}}$ & $\mathbf{0.00546}\err{7.6\times 10^{-5}}$ \\
\bottomrule
\end{tabular}
\par\vspace{0.25em}
\begin{minipage}{0.92\textwidth}
\footnotesize\raggedright
SSPA uses the \(30\)-seed \(M_{\mathrm{msg}}=64\) update-budget-controlled operating point. Dashes mark variants not rerun under that SSPA coding setup.
\end{minipage}
\end{table*}

For SSPA, we follow the diffusion-channel setup of \cite{paper2309} more closely and use message alphabet \(M_{\mathrm{msg}}=64\), \(n=8\), rate \(6/8\), and training at \(E_b/N_0=8\,\mathrm{dB}\).
The 30-seed SSPA operating point in Table~\ref{tab:wflow-ser-ber} shows that the corrected condition-wise Sinkhorn implementation is reliable once the W-Flow update budget is matched to the observed convergence of the sharp same-condition field.
Fig.~\ref{fig:wflow-all-ser-curves} reports SER over an \(E_b/N_0\) grid for AWGN, Rayleigh, SSPA, and TDL, which also shows where diffusion retains an advantage over one-shot drifting in downstream coding performance.


\begin{figure*}[!t]
\centering
\includegraphics[width=0.82\textwidth]{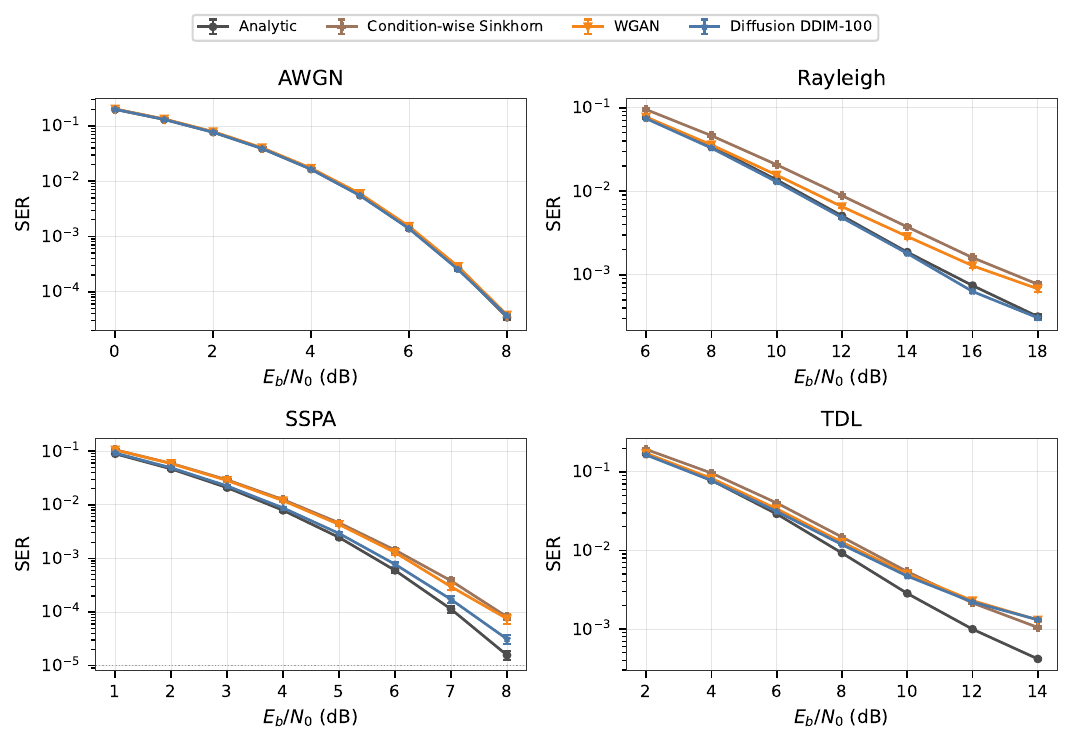}
\caption{\textbf{SER curves for learned channel surrogates.} Symbolic autoencoders are trained through each surrogate at the nominal point and evaluated on the analytic channel over an $E_b/N_0$ grid. Curves show seed means with standard-error bars; floor-clipped points are upper bounds.}
\label{fig:wflow-all-ser-curves}
\end{figure*}

\subsection{Equal wall-clock channel-implant training}
The previous downstream tables compare final trained channel implants.
The equal-wall-clock experiment fixes the symbolic autoencoder training time to \(1800\,\mathrm{s}\) per implant and records how many optimizer updates fit inside that budget.
This directly measures the inner-loop setting that motivates one-shot channel surrogates.
DDIM-100 is a strong learned reference in final-performance comparisons, but its iterative sampler leaves one to two orders of magnitude fewer autoencoder updates within the same budget.
Under this equal wall-clock budget, condition-wise Sinkhorn gives the lowest learned-implant SER on AWGN and SSPA, DDIM-100 is lowest on Rayleigh, and WGAN is lowest on compact TDL.
Condition-wise Sinkhorn reaches a competitive downstream operating point while retaining one-shot channel calls, although diffusion or WGAN can remain stronger on individual channels.

\begin{table*}[t]
\centering
\caption{\textbf{Equal wall-clock symbolic autoencoder training.} Each row trains the same symbolic autoencoder for \(1800\,\mathrm{s}\) through the listed channel implant and evaluates the resulting encoder/decoder on the analytic channel. Results are mean \(\pm\) standard error over 30 seeds.}
\label{tab:equal-wallclock-symbolic}
\tablestyle{3pt}{1.02}
\tablefontsize
\begin{tabular}{@{}llccc@{}}
\toprule
Channel & Training implant & Updates [k] & SER & BER \\
\midrule
AWGN & Analytic & 705.7 & $3.54\times 10^{-3}\err{3.94\times 10^{-5}}$ & $1.89\times 10^{-3}\err{2.31\times 10^{-5}}$ \\
& Condition-wise Sinkhorn & 626.8 & $\mathbf{3.62\times 10^{-3}}\err{3.65\times 10^{-5}}$ & $1.93\times 10^{-3}\err{2.31\times 10^{-5}}$ \\
& WGAN & 641.8 & $3.70\times 10^{-3}\err{4.06\times 10^{-5}}$ & $1.98\times 10^{-3}\err{2.87\times 10^{-5}}$ \\
& DDIM-100 & 21.5 & $5.32\times 10^{-3}\err{1.36\times 10^{-4}}$ & $2.83\times 10^{-3}\err{6.82\times 10^{-5}}$ \\
\midrule
Rayleigh & Analytic & 656.3 & $2.25\times 10^{-3}\err{4.57\times 10^{-5}}$ & $1.22\times 10^{-3}\err{2.94\times 10^{-5}}$ \\
& Condition-wise Sinkhorn & 602.1 & $5.40\times 10^{-3}\err{1.11\times 10^{-4}}$ & $2.92\times 10^{-3}\err{7.00\times 10^{-5}}$ \\
& WGAN & 614.0 & $4.83\times 10^{-3}\err{2.87\times 10^{-4}}$ & $2.55\times 10^{-3}\err{1.52\times 10^{-4}}$ \\
& DDIM-100 & 21.3 & $\mathbf{4.64\times 10^{-3}}\err{1.08\times 10^{-4}}$ & $2.47\times 10^{-3}\err{6.10\times 10^{-5}}$ \\
\midrule
SSPA & Analytic & 635.6 & $1.33\times 10^{-6}\err{6.31\times 10^{-7}}$ & $7.78\times 10^{-7}\err{3.81\times 10^{-7}}$ \\
& Condition-wise Sinkhorn & 617.1 & $\mathbf{8.80\times 10^{-5}}\err{1.87\times 10^{-5}}$ & $4.36\times 10^{-5}\err{8.30\times 10^{-6}}$ \\
& WGAN & 627.3 & $9.83\times 10^{-5}\err{3.15\times 10^{-5}}$ & $4.50\times 10^{-5}\err{1.27\times 10^{-5}}$ \\
& DDIM-100 & 21.4 & $9.00\times 10^{-5}\err{4.64\times 10^{-5}}$ & $4.03\times 10^{-5}\err{1.72\times 10^{-5}}$ \\
\midrule
TDL & Analytic & 449.0 & $1.30\times 10^{-3}\err{1.98\times 10^{-5}}$ & $6.98\times 10^{-4}\err{1.19\times 10^{-5}}$ \\
& Condition-wise Sinkhorn & 576.8 & $3.44\times 10^{-3}\err{6.05\times 10^{-5}}$ & $1.82\times 10^{-3}\err{3.43\times 10^{-5}}$ \\
& WGAN & 587.6 & $\mathbf{3.21\times 10^{-3}}\err{4.36\times 10^{-4}}$ & $1.70\times 10^{-3}\err{2.19\times 10^{-4}}$ \\
& DDIM-100 & 21.3 & $4.64\times 10^{-3}\err{1.23\times 10^{-4}}$ & $2.47\times 10^{-3}\err{6.36\times 10^{-5}}$ \\
\bottomrule
\end{tabular}
\par\vspace{0.25em}
\begin{minipage}{0.96\textwidth}
\footnotesize\raggedright
Bold marks the best learned implant per channel; the analytic row is the non-learned reference. The DDIM row uses 100 denoising steps per channel call, which yields many fewer optimizer updates within the same autoencoder-training time budget.
\end{minipage}
\end{table*}

\subsection{Long-block Turbo autoencoder (TurboAE) AWGN check}
We additionally test a convolutional neural network (CNN) TurboAE-style encoder--decoder at block length \(64\), rate \(1/2\), and AWGN training point \(E_b/N_0=4\,\mathrm{dB}\).
The TurboAE models are trained through either analytic AWGN or the condition-wise Sinkhorn AWGN surrogate and then evaluated on analytic AWGN.
This complements the short-block multi-channel SER/BER study with a longer differentiable coding loop.

Fig.~\ref{fig:turboae-long-block-curve} shows the resulting BER and block error rate (BLER) curves over \(30\) seeds.
At the \(4\,\mathrm{dB}\) training point, analytic-channel training gives BER \(1.96\times 10^{-3}\) and BLER \(8.18\times 10^{-2}\), with 95\% confidence intervals (CIs) \(5.21\times 10^{-4}\) and \(2.30\times 10^{-2}\).
Training through the condition-wise Sinkhorn surrogate gives BER \(3.30\times 10^{-3}\) and BLER \(1.39\times 10^{-1}\), with 95\% CIs \(4.30\times 10^{-4}\) and \(1.91\times 10^{-2}\).
The surrogate preserves the waterfall trend, with a measurable gap to analytic-channel training.

\begin{figure}[!htbp]
\centering
\includegraphics[width=0.84\columnwidth]{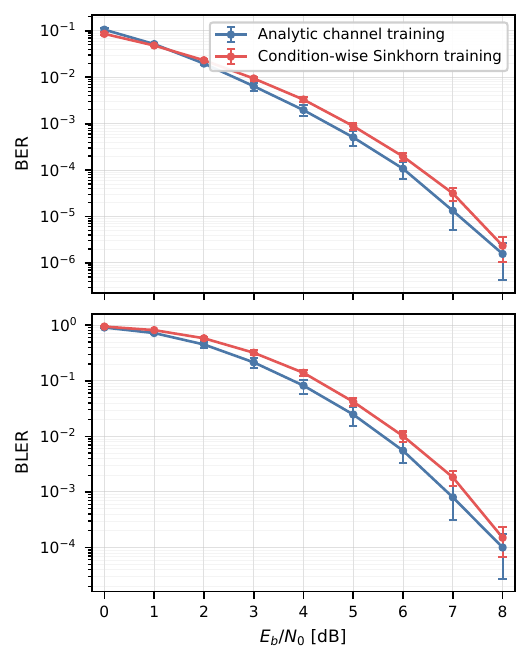}
\caption{\textbf{Long-block TurboAE AWGN check.} CNN TurboAE models at block length \(64\), rate \(1/2\), are trained through analytic AWGN or the condition-wise Sinkhorn AWGN surrogate and evaluated on analytic AWGN over \(30\) seeds.}
\label{fig:turboae-long-block-curve}
\end{figure}

\section{Discussion}
\subsection{Accuracy-latency operating point}
Condition-wise Sinkhorn drifting targets a low-latency conditional-surrogate operating point.
Diffusion remains the strongest learned reference on several hard downstream SER curves and pays for that fidelity through iterative sampling.
The one-shot drifting path uses a single generator evaluation and preserves the fixed-input structure of the channel law.
This makes it relevant when the learned channel is embedded in an inner training loop, where surrogate calls can dominate the computational budget.
Diffusion is preferable when final standalone fidelity dominates and its sampler cost is acceptable.
One-shot Sinkhorn is most attractive when repeated channel calls dominate and the observed SER gap is within tolerance.
Channel surrogates should therefore be judged by conditional-law quality and downstream communication metrics in addition to global sample-cloud agreement.

\subsection{Metric-space dependence}
The W-Flow run reinforces the metric-dependence of channel-surrogate evaluation.
Joint Sinkhorn and condition-wise Sinkhorn can swap order depending on whether the metric is global SWD, anchor-conditioned SWD, or downstream BER/SER\@.
We therefore report direct-output SWD, anchor-conditioned moment metrics, and downstream coding metrics side by side.
Hyperparameter selection from a single global SWD score would miss these differences.

\subsection{Conditional Sinkhorn scope}
The condition-wise estimator is directly applicable to analytic or simulator channels because repeated outputs can be drawn at the same transmitted symbol.
Measured channels are harder.
With only one observation per \(x\), the exact conditional law \(p(\cdot\mid x)\) cannot be observed at a point.
For that regime, the local conditional Sinkhorn approximation in Sec.~II estimates the target fixed-input output law from kernel-weighted neighborhoods in condition space and still solves the transport problem only in the output coordinate.
This keeps the fixed-input geometry intact, but introduces bandwidth selection and neighborhood bias.
The present experiments do not evaluate this measured-data approximation.
Validating it for measured-channel data is left for future work.
Practical measured-channel simulators may also need transport features that encode phase, amplitude, input-output geometry, channel-state information, or decoder-relevant projections.

\section{Conclusion}
This paper makes one-shot drifting applicable to learned channel simulation by replacing unconditional transport with condition-wise Sinkhorn transport over fixed-input conditional output laws.
The resulting generator preserves the transmitted symbol \(x\), transports only the conditional output law, and keeps the one-shot inference path that motivates drifting for inner-loop channel use.
The population formulation identifies equality of \(p(\cdot\mid x)\) and \(q_\theta(\cdot\mid x)\) for \(\mu\)-almost every \(x\) as the correct equilibrium geometry.
The implemented method is a finite-sample barycentric approximation followed by projected detached-target neural training.

Experiments on AWGN, Rayleigh, SSPA, and compact TDL channels show a clear accuracy-latency tradeoff.
Within the matched Sinkhorn drift-field ablation, condition-wise Sinkhorn is the strongest evaluated one-shot surrogate under downstream symbolic-coding checks, while diffusion remains strongest on the hardest learned-reference SER curves.
This positions the method as a low-latency, condition-preserving surrogate for settings where repeated channel calls are expensive.
The disagreement between global SWD, anchor-conditioned metrics, and downstream coding metrics makes metric choice a central part of learned channel simulation and motivates future work on channel-specific transport features and local conditional Sinkhorn approximations for measured channels.
\bibliographystyle{IEEEtran}
\bibliography{references}

\end{document}